\documentclass{IEEEtran4PSCC}
\usepackage{hyperref}
\ifCLASSINFOpdf
   \usepackage[pdftex]{graphicx}
\else
   \usepackage[dvips]{graphicx}
\fi

\usepackage{amsthm,amssymb}
\usepackage{cite,comment}
\usepackage{mathtools}
\usepackage{algorithm}
\usepackage{algpseudocode}
\usepackage{booktabs}
\usepackage{xcolor}
\usepackage{subcaption}
\usepackage{multirow}

\newcommand{\Ec}{\mathcal{E}}

\newcommand{\Gc}{\mathcal{G}}

\newcommand{\Nc}{\mathcal{N}}

\newcommand{\real}{\ensuremath{\mathbb{R}}}

\newcommand{\diag}{{\rm diag}}

\newtheorem{theorem}{Theorem}[section]

\newtheorem{lemma}[theorem]{Lemma}

\newtheorem{assumption}[theorem]{Assumption}
\theoremstyle{definition}
\newtheorem{remark}[theorem]{Remark}

\newcommand{\oprocendsymbol}{\hbox{$\bullet$}}
\newcommand{\oprocend}{\relax\ifmmode\else\unskip\hfill\fi\oprocendsymbol}

\let\originalleft\left
\let\originalright\right
\renewcommand{\left}{\mathopen{}\mathclose\bgroup\originalleft}
\renewcommand{\right}{\aftergroup\egroup\originalright}

\newcommand{\bP}[2][]{\Pr\ifthenelse{\isempty{#1}}{}{_{#1}}\left[#2\right]}
\newcommand{\bE}[2][]{\mathop\mathbb{E}\ifthenelse{\isempty{#1}}{}{_{#1}}\left[#2\right]}
\newcommand{\bI}[2][]{\mathop\mathbb{I}\ifthenelse{\isempty{#1}}{}{_{#1}}\left[#2\right]}
\newcommand{\Var}[2][]{ {Var}\ifthenelse{\isempty{#1}}{}{_{#1}}\left[#2\right]}

\makeatletter
\let\old@ps@headings\ps@headings
\let\old@ps@IEEEtitlepagestyle\ps@IEEEtitlepagestyle
\def\psccfooter#1{%
    \def\ps@headings{%
        \old@ps@headings%
        \def\@oddfoot{\strut\hfill#1\hfill\strut}%
        \def\@evenfoot{\strut\hfill#1\hfill\strut}%
    }%
    \def\ps@IEEEtitlepagestyle{%
        \old@ps@IEEEtitlepagestyle%
        \def\@oddfoot{\strut\hfill#1\hfill\strut}%
        \def\@evenfoot{\strut\hfill#1\hfill\strut}%
    }%
    \ps@headings%
}
\makeatother

\psccfooter{%
        \parbox{\textwidth}{\hrulefill \\ \small{24th Power Systems Computation Conference} \hfill \begin{minipage}{0.2\textwidth}\centering \vspace*{4pt} \includegraphics[scale=0.06]{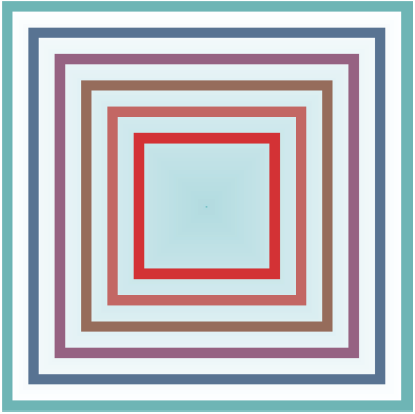}\\\small{PSCC 2026} \end{minipage} \hfill \small{Limassol, Cyprus --- June 8-12, 2026}}%
}

\begin{document}

\title{Efficient Policy Adaptation for Voltage Control Under Unknown Topology Changes}

\author{
\IEEEauthorblockN{Jie Feng}
\IEEEauthorblockA{
University of California San Diego\\
jif005@ucsd.edu}
\and
\IEEEauthorblockN{ Yuanyuan Shi}
\IEEEauthorblockA{
University of California San Diego\\
yyshi@ucsd.edu}
\and
\IEEEauthorblockN{Deepjyoti Deka}
\IEEEauthorblockA{
Massachusetts Institute of Technology\\
deepj87@mit.edu}
}

\maketitle

\begin{abstract}
Reinforcement learning (RL) has shown great potential for designing voltage control policies, but their performance often degrades under changing system conditions such as topology reconfigurations and load variations. 
We introduce a topology-aware online policy optimization framework that leverages data-driven estimation of voltage–reactive power sensitivities to achieve efficient policy adaptation. Exploiting the sparsity of topology-switching events, where only a few lines change at a time, our method efficiently detects topology changes and identifies the affected lines and parameters, enabling fast and accurate sensitivity updates without recomputing the full sensitivity matrix. The estimated sensitivity is subsequently used for online policy optimization of a pre-trained neural-network-based RL controller. Simulations on both the IEEE 13-bus and SCE 56-bus systems demonstrate over 90\% line identification accuracy, using only 15 data points. The proposed method also significantly improves voltage regulation performance compared with non-adaptive policies and adaptive policies that rely on regression-based online optimization methods for sensitivity estimation.
\end{abstract}

\begin{IEEEkeywords}
Data-Driven Voltage Control, Online Policy Optimization, Topology Change Detection.
\end{IEEEkeywords}

\section{Introduction}
Voltage regulation is a fundamental requirement for reliable distribution grid operations, but it has become increasingly challenging with the rapid growth of distributed energy resources (DERs) \cite{2017volt-survery}. The variability of renewable generation and load, together with the fast dynamics introduced by inverter-based resources, creates frequent and unpredictable voltage fluctuations that traditional devices such as on-load tap changers and capacitor banks cannot address effectively. In contrast, smart inverters are naturally suited for real-time voltage control, which can adjust reactive or real power flexibly based on real-time voltage measurements. 

A wide range of inverter-based voltage control strategies has been explored in the literature, broadly categorized as centralized, decentralized, and distributed \cite{2017volt-survery,PS-RH-VJN-VV-AMA-AKS:23}. Centralized approaches, often formulated as optimal power flow (OPF) problems \cite{Farivar-2012-VVC-PES,chen_input_2020} or model predictive control (MPC) schemes \cite{guo2019-mpc,maharjan-2021-mpc}, can achieve optimal coordination of devices but are computationally intensive. Moreover, these approaches require accurate system models and reliable communication infrastructure, which are often unavailable in practical distribution grids. At the other extreme, decentralized controllers, such as standardized droop-based rules \cite{IEEE1547_2018,zhu2016fast,8003321}, provide fast, local responses at scale but may yield suboptimal setpoints and struggle to adapt to time-varying topology changes. Distributed optimization methods, including consensus-based updates \cite{BZ-AYSL-ADDG-DT:14,liu-distriuted-2018} and primal–dual algorithms \cite{7967778}, seek to balance coordination and scalability by leveraging limited communication among neighboring devices. but require careful parameter tuning for iterative convergence. Despite their differences, the effectiveness and stability of these approaches ultimately depend on accurate and up-to-date network models.

When accurate network models are unavailable, learning-based controllers provide a compelling alternative. In particular, reinforcement learning (RL) enables controllers to derive effective control policies directly through interactions with the environment, eliminating the need for explicit system modeling. Once trained, RL controllers can capture nonlinear dynamics and operate in real time \cite{9721402}. Recent studies~\cite{10336939,11164977,JF-WC-JC-YS:23-csl} have further incorporated stability guarantees into RL-based voltage control by enforcing monotonicity constraints on neural policies to ensure Lyapunov stability. However, existing RL-based voltage control methods lack the ability to adapt to time-varying network topologies, as they are typically trained offline and deployed online without policy adaptation.

Complementing RL-based approaches, data-driven voltage control methods can leverage online measurements to adapt to evolving system conditions. These methods rely on voltage–control sensitivity information, which characterizes how voltage magnitudes respond to control actions. The sensitivities are typically estimated from data and subsequently used to update controllers or directly compute control inputs~\cite{8873667}. Most existing approaches estimate sensitivities using ordinary least squares~\cite{8383953}, ensemble regression~\cite{8412143}, or recursive least squares~\cite{6687276,picallo2022adaptive,7997787}. We refer readers to the recent review in~\cite{10934057} for a comprehensive overview. Similarly, topology estimation methods~\cite{7875102,park2018exact,8383953,park2020learning,deka2023learning} aim to recover network connectivity and line parameters from data, which can also provide sensitivity information. However, these methods typically assume a fixed network topology during operation and are not designed for real-time topology change detection.

In practice, distribution networks are frequently reconfigured through switching operations for reliability, maintenance, or fault isolation~\cite{deka2023learning}. To address the problem of time-varying system models, several topology change detection methods have been proposed, such as trend-vector analysis using µPMU data~\cite{7286490}, graph-signal-processing-based detection~\cite{9640113}, residual- and $\chi^2$-based tests integrated with state estimation~\cite{4543032}, and covariance analysis~\cite{deka2020graphical,10510511}. While these methods have improved detection accuracy, they often depend on full observability using Phasor Measurement Units, require knowledge of possible reconfiguration scenarios, or cannot distinguish load variations from true topology changes, limiting their applicability within closed-loop voltage control.

\begin{figure}
    \centering
    \includegraphics[width=\linewidth]{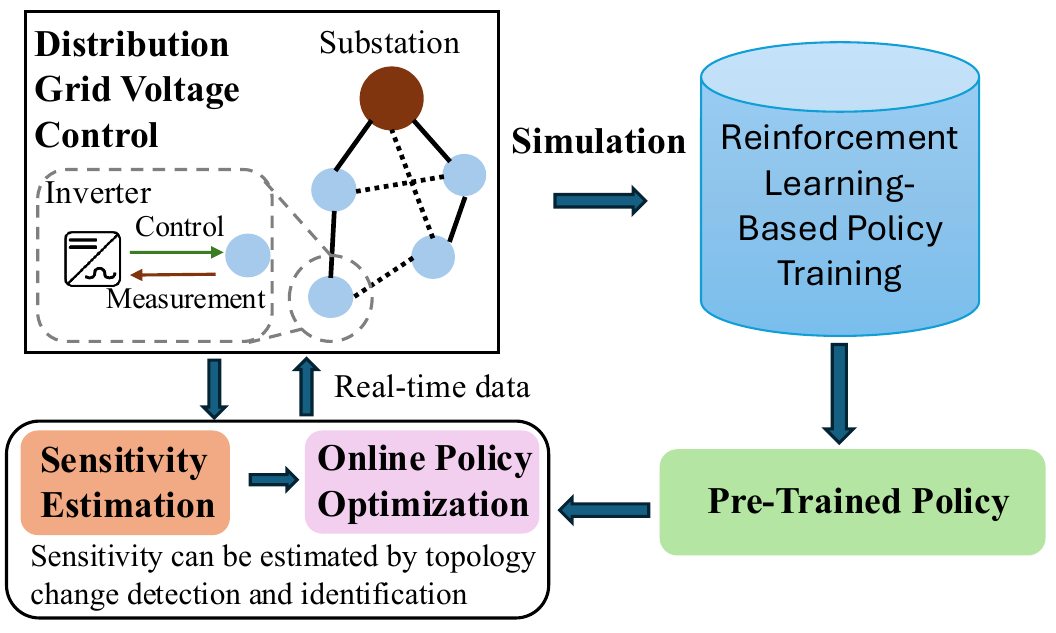}
    \caption{Overview of the proposed framework.} 
    \label{fig:overview}
\end{figure}

\subsection{Contributions}
This work introduces an efficient online policy adaptation framework for voltage control under unknown topology changes. 
The proposed method leverages the radial structure of distribution networks and the sparsity of topology changes to enable fast topology detection and sensitivity estimation using only a few voltage measurements and control actions. The estimated sensitivities are then incorporated into an online policy optimization framework~\cite{lin2024online}, which adapts a pretrained neural network–based RL controller to the new topology and minimizes real-time control costs. Figure~\ref{fig:overview} provides an overview of the framework. Our main contributions include: 
\begin{enumerate}
\item We propose an efficient algorithm for topology change detection and identification in radial distribution grids, capable of accurately localizing changed lines, reconstructing topology, and estimating parameters within a short observation window.
\item We integrate this algorithm with an online policy optimization framework, enabling real-time adaptation of neural network–based voltage controllers to load variations and topology reconfigurations.
\item We validate the framework in nonlinear simulations on the IEEE 13-bus and SCE 56-bus systems, achieving over 90\% accuracy in line-change identification, a 25\% reduction in control cost for the SCE 56-bus system, and 75\% reduction in sensitivity matrix estimation error.
\end{enumerate}

The remainder of the paper is organized as follows. 
Section~\ref{sec:pre} introduces the preliminaries, including the distribution grid model, voltage control formulation, and existing sensitivity estimation methods. Section~\ref{sec:topo_detect} presents the proposed topology change detection and identification algorithm, followed by the online policy optimization framework in Section~\ref{sec:online voltage control}. Section~\ref{sec:experiments} reports the numerical results, and conclusions are drawn in Section~\ref{sec:conclusion}.

\section{Preliminary}\label{sec:pre}
In this section, we first introduce the distribution grid (DG) model and the voltage control problem formulation, followed by a brief overview of sensitivity estimation necessary for data-driven voltage control.
\subsection{Distribution Grid Modeling}
A radial single-phase (or a balanced three-phase) DG having $N+1$ buses can be modeled by a tree graph $\Gc=(\Nc,\Ec)$ rooted at the substation. The set $\Nc_0:=\{0,\ldots,N\}$ contains all buses, and $\Ec$ denotes the set of lines. The substation node, labeled as 0, is modeled as an ideal voltage source fixed at 1 p.u.. We 
let $\Nc = \Nc_0 \setminus \{0\}$. For each bus $i\in \Nc$, let $v_i\in \real$ be its voltage magnitude, and $p_i,q_i\in \real$ its real and reactive power injections, respectively. 
We collect these quantities into vectors $ v,p, q \in \real^N$ for buses $1,2,\dots N$.
For each line $(m,n)$, let $r_{mn}$ and $x_{mn}$ denote its resistance and reactance, and $P_{mn}$ and $Q_{mn}$ the real and reactive power from bus $m$ to $n$.
Using the linearized DistFlow (LinDistFlow) approximation~\cite{HZ-HJL:15}, the power flow model in radial distribution networks can be represented as 
\begin{subequations}\label{eq:linear_pf}
    \begin{align}
&P_{mn}-\smashoperator{\sum_{(n,k)\in\Ec}} P_{n k}  =-p_n, ~~~
Q_{mn}-\smashoperator{\sum_{(n,k) \in \Ec}} Q_{n k}  =-q_n, \\
&v_m-v_n =2(r_{mn} P_{mn}+x_{mn} Q_{mn}),
\end{align}
\end{subequations}
which can be written in the following compact vector form
 \begin{align}\label{eq:v=Rp+Xq}
 v = R p + X q + v_0 {1}=X q +v_{env}.
 \end{align}

Here, the system matrices $R = [R_{mn}]_{N \times N}$ and $X = [X_{mn}]_{N \times N}$ with $R_{mn} := 2 \sum_{(i,j) \in \Ec_m \cap \Ec_n} r_{ij}$ and $X_{mn} := 2 \sum_{(i,j) \in \Ec_m \cap \Ec_n} x_{ij}$, where $\Ec_h \subseteq \Ec$ contains the lines on the unique path from bus $0$ to bus $h$. The voltage can be separated into the controllable part by $X q$ and the uncontrollable part $v_{env} = R p +v_0 {1}$. Particularly, we have that $X,R \succ 0$. 
Note that we only use the linear model \eqref{eq:linear_pf} for theoretical analysis, the experiments are conducted with the original nonlinear DistFlow model \cite{MEB-FFW:89}.

\subsection{Voltage Control Problem Formulation}
The goal of voltage control is to restore the voltage deviation with minimal control efforts. Given the voltage measurement $v_t$, the controller incrementally adjusts reactive power injection $q_{t+1}$, leading to a new voltage profile $v_{t+1}$. We focus on a controllable subset of $M\leq N$ buses ${\mathcal{P}}\in\mathcal{N}$ with reactive power $q_{\mathcal{P},t}$, where each controllable node regulates its reactive power based solely on local voltage measurements. This decentralized structure ensures reliable and scalable operation even without continuous communication.

The voltage control problem is formulated as follows:
\begin{subequations}\label{eq:volt-control-problem}
    \begin{align}
    \min_{\theta}& \quad  J(\theta)= \sum_{t=0}^{T}h_{t+1}, \label{eq:question,cost}\\
    \text{s.t.}\; &h_{t+1} = v_{\mathrm{dev},t+1}^\top  {Q}_x v_{\mathrm{dev},t+1} 
     + {q}_{\mathcal{P}, t+1}^\top  {Q}_u  {q}_{\mathcal{P}, t+1}, \hfill \label{eq:cost}\\
&{v}_{t+1} = {X}_{\mathcal{P}} {q}_{\mathcal{P},t+1} + Xq_{\mathrm{nc}} + {v}_{\mathrm{env}}\label{equ:dynamics:power_flow}\\
& q_{\mathcal{P},t+1}=q_{\mathcal{P},t}+u_t, \hfill \label{eq:q_update}\\
& {u}_t =\pi( {v}_t,\theta_t),  \hfill \label{equ:dynamics-simple:policy}
    \end{align}
\end{subequations}
where $v_{\mathrm{dev},t} := v_t - v_{\mathrm{nom}}$ is the voltage deviation, and $v_{\text{nom}}=1$ is the nominal voltage. ${Q}_x\in \mathbb{R}^{N \times N}$, ${Q}_u\in \mathbb{R}^{M \times M}$ are positive definite cost matrices, which define the voltage deviation and reactive injection costs. 
  
Equation \eqref{equ:dynamics:power_flow} represents the power flow model. $q_{nc}\in \mathbb{R}^{N}$ denotes the uncontrolled reactive power injections, and ${X}_{\mathcal{P}}\in\mathbb{R}^{N \times M}$ is the sensitivity of voltage magnitude with respect to the controllable reactive power injections, i.e., ${X}_{\mathcal{P}}$ is a sub-matrix obtained by keeping only the columns of $X$ indexed by $\mathcal{P}$.
Equation \eqref{eq:q_update} defines the reactive power update law, where the next reactive power setpoint $q_{\mathcal{P},t+1} \in \mathbb{R}^M$ is determined by the previous setpoint $q_{\mathcal{P},t}$ and the control action $u_t\in\mathbb{R}^M$. 
The control action is given by \eqref{equ:dynamics-simple:policy}, where $\pi( {v}_t,\theta_t)$ is a decentralized monotone neural-network controller parameterized by $\theta_t$, as in \cite{10336939}. 
Note that the time index on $\theta_t$ indicates that the controller parameters can be updated in real time.  Figure~\ref{fig:causal} depicts the causal relationships among system states, control inputs, policy parameters, and costs under the proposed algorithm.
The detailed procedure for online policy parameter updates is presented in Section \ref{sec:online voltage control}.

\begin{figure}
    \centering
    \includegraphics[width=\linewidth]{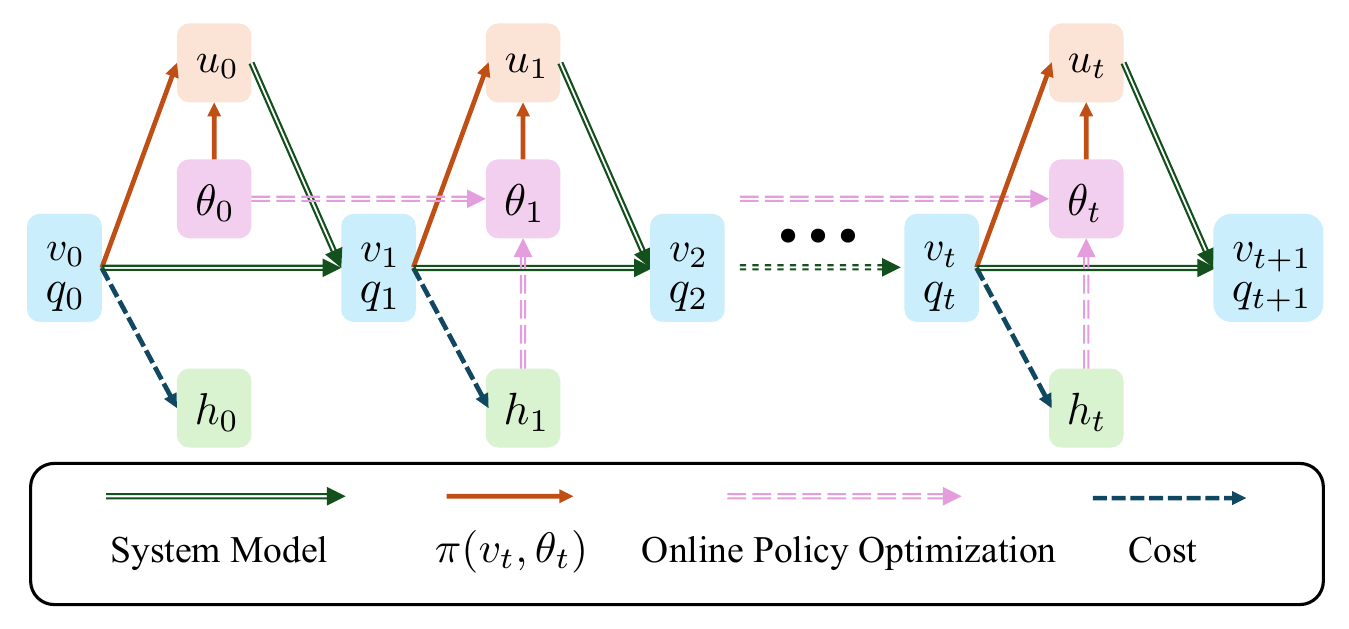}
    \caption{Diagram of the causal relationships between states, policy parameters, control inputs, and costs when online policy optimization is active. The policy parameters are updated after observing $h_t$ and applied to compute $u_t$.}
    \label{fig:causal}
\end{figure}

Notably, the voltage-control sensitivity matrix $X_{\mathcal{P}}$ in \eqref{equ:dynamics:power_flow} is critical for data-driven voltage control \cite{9302942,8873667,picallo2022adaptive,8412143, 8003321}, including ours, as it captures how reactive-power adjustments at the controllable buses affect the voltage magnitudes~\cite{abdelkader2024advancements}. 
The sensitivity matrix explicitly depends on the network topology. In the next section, we present existing methods for sensitivity estimation that can be used for data-driven voltage control under topology changes. 

\subsection{Sensitivity Estimation for Data-Driven Voltage Control}\label{sec:sub_sensitivity}
Define the voltage change between consecutive time steps as $\tilde{v}_t = v_{t+1}-v_t$. From \eqref{equ:dynamics:power_flow} and \eqref{eq:q_update}, we obtain
\begin{equation}\label{eq:vlt_diff}
    \tilde{v}_{t+1} = X_{\mathcal{P}} u_t.
\end{equation}
Given a set of historical data  $\{(u_t, \tilde{v}_{t+1})\}_{t=0}^{T}$, 
the sensitivity matrix $X_{\mathcal{P}}$ can be estimated by solving the following least-squares regression problem:
\begin{equation}
    \min_{X_{\mathcal{P}}}\; J(X_{\mathcal{P}}) = 
    \sum_{t=0}^{T}\| \tilde{v}_{t+1} - X_{\mathcal{P}} u_t \|_2^2.
\end{equation}

Several estimation strategies have been proposed in the literature for identifying the sensitivity matrix. The \emph{ordinary least squares} (OLS) method provides a batch estimate of $X_{\mathcal{P}}$ using historical voltage–control data under the assumption of a fixed network topology \cite{8383953, 8873667}. However, the presence of topology changes during data collection and the inherent collinearity in power system operational data \cite{7997787} can lead to degradation in its estimation performance.
The \emph{recursive least squares} (RLS) method improves upon OLS by enabling online updates as new data arrive \cite{picallo2022adaptive}, yet it remains limited in handling abrupt topology reconfigurations. For completeness, the details of the OLS and RLS methods are presented in Appendix \ref{appendix:sensitivity}. 
To address the aforementioned challenges, in this work, we propose a \emph{topology-aware estimation} approach that exploits the distribution network structure to detect and identify topology changes in real time, update the sensitivity matrix estimation, and use it for online policy optimization. 
\begin{remark}
    For clarity of derivation, Sections~\ref{sec:topo_detect} and~\ref{sec:online voltage control} adopt the LinDistFlow model with $\mathcal{P}=\mathcal{N}$, i.e., all buses are assumed to be controllable. In simulations, however, only a subset of buses has control capability. When controllability is partial, the topology change detection in Section~\ref{sec:topo_detect} remains valid because it relies solely on voltage-deviation estimation and does not require direct actuation at every bus. As a result, global topology changes can still be identified. Given the detected change and the original topology, the updated topology can be reconstructed. For online policy optimization, the reduced sensitivity matrix $X_{\mathcal{P}}$ is then derived from the estimated topology and used for control.
\end{remark}

\section{Topology Change Detection And Identification}
\label{sec:topo_detect}
This section introduces the proposed topology change detection and identification algorithm, which enables real-time detection of reconfiguration events and efficient identification of the affected lines and their parameters. 
For simulations, the algorithm is tested under the nonlinear DistFlow model, where only a subset of nodes is controlled, as detailed in Section~\ref{sec:experiments}.

We begin by outlining the assumptions under which the algorithm operates.
\begin{assumption}\label{asm:separation}
    There exists a timescale separation such that real-time voltage control dynamics evolve much faster than the timescales of load variations and topology changes.
\end{assumption}
This timescale separation assumption is standard \cite{2017volt-survery} as measurements and control actions can be executed on sub-second timescales, while load variations typically evolve more slowly and topology reconfigurations occur even less frequently. This separation ensures that topology changes can be reliably detected and addressed using measurements of voltage control. 

\begin{assumption}\label{asm:radial}
    The distribution grid operates in a radial configuration both before and after each topology reconfiguration event.
\end{assumption}
To ease protection and control, most distribution grids are operated in a radial (tree-like) topology \cite{kersting2018distribution}. In this work, we focus on the topology change detection for such systems.

\begin{assumption}\label{asm:known}
    The initial network topology and line parameters (denoted $X_0$) are assumed to be known. However, the timing and details of all topology reconfigurations, including the lines affected and the parameters of any newly added lines, are unknown.
\end{assumption}
This assumption is consistent with practice: system operators typically possess knowledge of a nominal system topology and parameters or can infer them through routine topology estimation. However, unforeseen reconfigurations resulting from faults or unrecorded topology switching actions remain uncertain.

\subsection{Topology Change Detection}
\label{subsec:topo detect}
The first step is to detect the occurrence of a topology change event. Since both load variations and topology changes can lead to voltage deviations, we discuss methods to differentiate between them.

Following \eqref{eq:v=Rp+Xq}, the pre-event voltage at time $t-1$ is
$$v_{t-1} = R_{t-1}p_{t-1} + X_{t-1}q_{t-1} + v_0 {1}.$$
By Assumption~\ref{asm:known}, $X_{t-1}$ is known. Suppose that at time $t$, a load change (real-power variation) or a topology change occurs, yielding
$$v_t = R_t p_t + X_t q_t + v_0 {1},$$
where $R_t = R_{t-1} + \Delta R$, $X_t = X_{t-1} + \Delta X$, denote the possible change in $R,X$ due to topology change. $p_t = p_{t-1} + \Delta p$ represents the possible load variation, and $q_t = q_{t-1} +  u _{t-1}$, where $u_t$ is the control action defined in Equation \eqref{equ:dynamics-simple:policy}\footnote{For simplicity, reactive power variations in non-controllable loads ($q_{nc}$ in Eq \eqref{equ:dynamics:power_flow}) are omitted, as they can be handled like real power changes.}. The corresponding voltage difference is
$$\tilde{v}_{t} = v_t - v_{t-1} 
= R_t\Delta p + \Delta R p_{t-1} + X_t u _{t-1} + \Delta X q_{t-1}.$$
If no further topology or load changes occur at $t+1$, then
\begin{equation}\label{eq:voltage deviation}
    \tilde{v}_{t+1} = v_{t+1} - v_t = X_t u _t.
\end{equation}

Assuming the controller is unaware of the change between time steps $t$ and $t+1$, it continues to predict the voltage changes using the previously given topology matrix $X_{t-1}$: 
$$\tilde{v}_{t}^{\mathrm{pred}} = X_{t-1} u _{t-1}, 
\qquad
\tilde{v}_{t+1}^{\mathrm{pred}} = X_{t-1} u _t.$$
We define the prediction error as $e_t = \tilde{v}_t - \tilde{v}_t^{\text{pred}}$.  
The behavior of this error can be used both to determine whether a change has occurred and to distinguish between a topology reconfiguration and a load variation, as formalized below.

\begin{lemma}[Detection Criterion under LinDistFlow Model]
\label{def:detection}
A change event is detected at time $t$ if $\|e_{t}\|_2 >0$. If $\|e_{t+1}\|_2 >0$, the change is due to a topology change, otherwise if $\|e_{t+1}\|_2=0$, the change is attributed to a load change. 
\end{lemma}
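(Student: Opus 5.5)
We compute the two prediction errors explicitly from the expressions just derived and read off the cases. Throughout we work under the LinDistFlow model with $\mathcal{P}=\mathcal{N}$, as fixed in the remark of Section~\ref{sec:sub_sensitivity}. We also use the standing hypothesis that no further load or topology change occurs at $t+1$, which Assumption~\ref{asm:separation} makes reasonable.

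\emph{Step 1: the error at time $t$.} Subtracting $\tilde v_t^{\mathrm{pred}}=X_{t-1}u_{t-1}$ from the expression for $\tilde v_t$ and using $X_t=X_{t-1}+\Delta X$ gives
\begin{equation*}
e_t = R_t\Delta p + \Delta R\, p_{t-1} + \Delta X\,(u_{t-1}+q_{t-1}) = R_t\Delta p + \Delta R\, p_{t-1} + \Delta X\, q_t .
\end{equation*}
If no event occurs ($\Delta p=0$, $\Delta R=0$, $\Delta X=0$), then $e_t=0$. Hence $\|e_t\|_2>0$ certifies that some change has happened, which is the detection part of the statement. Conversely, we argue that a genuine event produces $e_t\neq0$ generically. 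For a load change alone we have $e_t=R_t\Delta p\neq 0$ because $R_t\succ0$. For a topology change the expression is a nonzero perturbation evaluated at the current injections; it vanishes only on a measure-zero set of $(p_{t-1},q_t)$. We state this nondegeneracy as the implicit assumption under which the criterion is meaningful.

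\emph{Step 2: the error at time $t+1$.} By \eqref{eq:voltage deviation}, $\tilde v_{t+1}=X_tu_t$. Therefore
\begin{equation*}
e_{t+1} = X_t u_t - X_{t-1}u_t = \Delta X\, u_t .
\end{equation*}
For a pure load change the network is unchanged, so $\Delta X=0$ (and $\Delta R=0$), and $e_{t+1}=0$ identically. The only change that makes $e_{t+1}\neq0$ is therefore a topology change.

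\emph{Step 3: separating the two cases.} Step 2 shows that $\|e_{t+1}\|_2>0$ forces $\Delta X\neq0$, i.e.\ a topology change. For the converse we need that a topology change yields $\Delta X u_t\neq 0$. Under Assumption~\ref{asm:radial}, a reconfiguration between two radial trees alters the path sets $\Ec_h$ for some buses, and the line reactances are positive. Hence $\Delta X\neq0$, and $\Delta X u_t=0$ only if $u_t\in\ker\Delta X$. We therefore read ``otherwise if $\|e_{t+1}\|_2=0$, attributed to load change'' as holding whenever $u_t\notin\ker \Delta X$, which is true generically and in particular whenever the controller responds to the nonzero deviation produced by the event.

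\emph{Main obstacle.} The substantive difficulty is exactly this converse: $\Delta X u_t$ (and $e_t$ in the topology case) could vanish for special control actions, since $\Delta X$ is typically low rank when only a few lines switch. To handle it, I would first show that $\Delta X\neq 0$ using the path-sum formula for $X_{mn}$. I would then argue that the exceptional set $\{u_t:\Delta Xu_t=0\}$ is a proper subspace, so the criterion fails only on a measure-zero set of actions. If needed, this can be repaired in practice by observing the error over several subsequent steps.
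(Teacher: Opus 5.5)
Your proposal is correct and follows the paper's own route: you compute $e_t=R_t\Delta p+\Delta R\,p_{t-1}+\Delta X\,q_t$ and $e_{t+1}=\Delta X\,u_t$ and split on whether $\Delta X$ vanishes, and you rightly treat as generic rather than automatic the two facts the paper simply calls ``clear'' ($e_t\neq 0$ and $\Delta X u_t\neq 0$ after a topology change). Two small fixes: $\Delta X\neq 0$ follows most cleanly from \eqref{eq:X_decomp}, since $X^{-1}=A^\top D_x^{-1}A$ is the grounded weighted Laplacian and so determines the edge set (changed path sets alone do not obviously change the path sums); and your side claim that $u_t\notin\ker\Delta X$ whenever the controller responds is unjustified, because $\Delta X=-X_t\,\Delta X_{inv}\,X_{t-1}$ has rank at most $|\mathcal{E}_{\mathrm{add}}|+|\mathcal{E}_{\mathrm{del}}|$, so a nonzero $u_t$ can lie in its kernel---keep only the genericity statement.
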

\textbf{Proof:} Under Assumption \ref{asm:separation}, when topology change occurs, $\Delta X, \Delta R\neq 0$. Under load variation without topology change $\Delta X, \Delta R =0$ and $\Delta p\neq 0$. Using the expression for $\tilde{v}_{t}, \tilde{v}_{t}^{\mathrm{pred}}$, it is clear that $e_{t}\neq 0$ when a topology change or load change occurs at time $t$ and hence $\|e_{t}\|_2 >0$. If it was due to a topology change, then $e_{t+1}=\tilde{v}_{t+1}-\tilde{v}_{t+1}^{\mathrm{pred}} = \Delta Xu_t \neq 0$. However if it was due to a load change, then $e_{t+1}= \tilde{v}_{t+1}-\tilde{v}_{t+1}^{\mathrm{pred}} = 0$. Thus, the result holds.
  
\textbf{Extension to nonlinear power flow model:} Nonlinearities in the power flow equations cause prediction errors, which can affect detection accuracy. To mitigate this, we propose a practical approach based on robust baselines~\cite{hampel1974influence} as below.

\begin{enumerate}
\item At each time $t$, compute the error norm $\lVert e_t\rVert_2$, and store it in a history buffer $B_t = \{\|e_{t-H}\|_2, ..., \|e_{t}\|_2\}$, where $H$ is the length of the error buffer.
\item Compute the median and median absolute deviation (MAD), i.e., $\text{med}=\text{median}(B_t),\,\text{MAD}=\text{median}(|B_t-\text{med}|)$. The robust baseline is defined as $\kappa_t = \text{med}+3.5\times\text{MAD}$. This median–MAD approach is a standard technique for robust anomaly detection, where the ratio typically ranges between 2 and 4.
\item Flag a change event when $\lVert e_t\rVert_2>\kappa_t$. Record the detection time, and proceed to the next step to distinguish whether it is a topology change or a load change event.
\item At time $t+1$, if the error remains, i.e., $\lVert e_{t+1}\rVert_2>\kappa_t$, we label it as a topology change; otherwise, we label it as a load change.
\end{enumerate}
The benefit of this mechanism is that it uses a robust baseline derived from the recent buffer and thus enables reliable real-time event detection without manual threshold tuning. 

\subsection{Topology Change Identification}
\label{sec:identification}
Once a topology change is detected, we identify the specific line changes and their parameters. An illustrative diagram of topology change is presented in Figure \ref{fig:topo_change_diag}.

\begin{figure}[tb]
    \centering
    \includegraphics[width=0.75\linewidth]{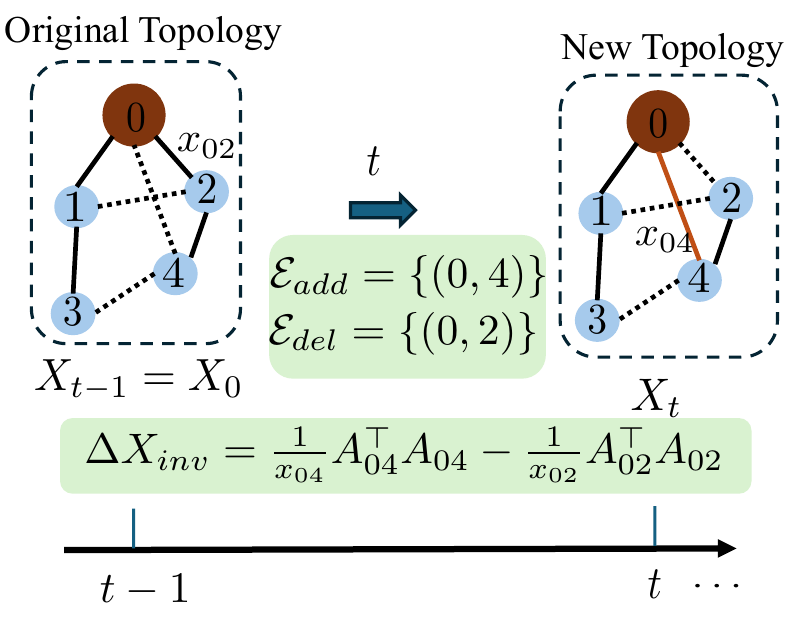}
    \caption{Diagram of a topology change event at time $t$. }
    \label{fig:topo_change_diag}
\end{figure}

\subsubsection{Residual-based topology change modeling}
Suppose a topology change occurs at time $t$. The voltage deviation at time $t+1$ is given by \eqref{eq:voltage deviation}. 
In a radial distribution grid, the sensitivity matrix $X_t$ is invertible \cite{deka2023learning}. Assuming no additional topology change after time $t$ as per Assumption \ref{asm:separation}, we have $X_{t+1} = X_t$ and the control input
$u _{t}=X_{t}^{-1} \tilde{v}_{t+1}$
with
\begin{equation}
\label{eq:X_decomp}
X_{t}^{-1} = A^\top D_x^{-1}A=\sum_{(i,j)\in\mathcal{E}}x_{ij}^{-1}A_{ij}^\top A_{ij}~,
\end{equation}
where $A$ is the reduced incidence matrix, $D_x=\diag({x_{ij}})$, and $A_{ij}=e_i^\top-e_j^\top$ for line $(i,j)$. Since $A_{ij}$ is supported only on buses $i$ and $j$, each term $A_{ij}^\top A_{ij}$ introduces nonzero entries only between directly connected buses. Consequently, $X_{t}^{-1}$ is sparse, with its sparsity pattern matching the grid topology.

Define the difference in the inverse sensitivity matrices as $\Delta X_{inv}=X_{t}^{-1}-X_{t-1}^{-1}$, where $X_{t-1}=X_{0}$ corresponds to the initial network topology. Following the decomposition in \eqref{eq:X_decomp},
\begin{equation}
\label{eq:X_inv}
    \Delta X_{inv} \!=\! \sum_{(i,j)\in\mathcal{E}_{\mathrm{add}}}\frac{1}{x_{ij}}A_{ij}^\top A_{ij}\!-\!\sum_{(k,m)\in\mathcal{E}_{\mathrm{del}}}\frac{1}{x_{km}}A_{km}^\top A_{km},
\end{equation}
where $\mathcal{E}_{\mathrm{add}}$ and $\mathcal{E}_{\mathrm{del}}$ denote the sets of added and deleted lines during the topology change event, respectively. To preserve the radial network structure, the number of added and deleted lines should be the same, i.e.,  $|\mathcal{E}_{\mathrm{add}}|=|\mathcal{E}_{\mathrm{del}}|$.

Given control $u_{t-1}$, the pre-event model--based on outdated topology information--predicts the voltage deviation as $\tilde v_{t}^{\mathrm{pred}}=X_0 u_{t-1}$, while the true voltage deviation under the new topology is $\tilde v_{t}=X_t u_{t-1}$. 
Since the predicted and the true voltage deviations differ due to the topology change, we can relate them through the control input, 
\begin{equation}\label{eq:u-vpre}
u_{t-1}=X_0^{-1}\tilde v_{t}^{\mathrm{pred}}=X_t^{-1}\tilde v_{t}=(X_0^{-1}+\Delta X_{\mathrm{inv}})\tilde v_{t}
\end{equation}
Equation \eqref{eq:u-vpre} links the predicted and true voltage deviations, which makes it possible to isolate the effect of the topology change. We capture it through the following residual,
\begin{equation}
\label{eq:residual_0}
    r_{t}^{res}=X_0^{-1}(\tilde{v}_{t}^{pred}-\tilde{v}_{t})=\Delta X_{inv}\tilde{v}_{t}.
\end{equation}
To make the structure of $\Delta X_{inv}$ defined in \eqref{eq:X_inv} explicit, we collect the incidence vectors of the candidate added and deleted lines into $E = [A_{ij}^\top, A_{km}^\top]\in\mathbb{R}^{N\times (|\mathcal{E}_{\mathrm{add}}|+|\mathcal{E}_{\mathrm{del}}|)}$, and define $\Gamma=\diag(\{x_{ij}^{-1}\}_{\forall (i,j)\in\mathcal{E}_{\mathrm{add}}},\{-x_{km}^{-1}\}_{\forall(k,m)\in\mathcal{E}_{\mathrm{del}}})$.
Substituting these definitions into \eqref{eq:residual_0}, the \emph{residual} term can be re-written as 
\begin{equation}\label{eq:residual}
    r_{t}^{\mathrm{res}}=E\Gamma E^\top \tilde{v}_{t}.
\end{equation}
The residual vector $r_{t}^{\mathrm{res}} \in \mathbb{R}^N$ lies in the range space of $E$ and is \emph{sparse} -- only nodes that are connected to the changed lines exhibit non-zero residuals. Consequently, the residual $r_{t}^{\mathrm{res}} = X_0^{-1}(\tilde{v}_{t}^{\mathrm{pred}} - \tilde{v}_{t})$ serves as an effective statistic for identifying topology changes,  as discussed next.
\begin{remark}
If a shortlist of plausible switches (or buses) is available, one can monitor only those buses and their neighboring buses to obtain $r_{t}^{\mathrm{res}}$ due to the sparsity of $X_0^{-1}$and $E$. In this work, however, we assume no prior knowledge and use full voltage measurements. 
\end{remark}

\subsubsection{Identification of the involved nodes}
\label{sec:node}
Because $r_t^{\mathrm{res}}$ lies in the range space of $E$, a topology reconfiguration produces a \emph{sparse} residual.  Once a topology change is detected, we take the detection timestamp $t$ returned by the algorithm in Section \ref{subsec:topo detect} as the start of the post-event window. We then stack the next $d$ residuals, using a short window to balance noise suppression and responsiveness, as follows,
$$R^{\mathrm{res}} = \bigl[\,r_t^{\mathrm{res}},\, r_{t+1}^{\mathrm{res}},\, \ldots,\, r_{t+d-1}^{\mathrm{res}}\,\bigr] \in \mathbb{R}^{N\times d}.$$
A bus $i$ is considered \emph{active} at time $t$ if its normalized score exceeds a small threshold~$\tau$:
\begin{equation}\label{eq:node_active}
    \frac{|r_t^{\mathrm{res}}(i)|}{\frac{1}{N}\sum_i^N|r_t^{\mathrm{res}}(i)|} > \tau.
\end{equation}
A practical choice for $\tau$ is to set it relative to the background noise level in the residuals. Finally, a bus is identified as \emph{involved} if it remains active for at least a fraction $\beta$ of the $p$ time steps, i.e.,
\begin{equation}\label{eq:node_id}
N_c = \Bigl\{\, i \;\Big|\; \tfrac{1}{p}\sum_{k=1}^{p}1\{i\text{ active at }t+k\} \ge \beta \Bigr\}.
\end{equation}
We use a high persistence threshold (e.g., $\beta = 0.8$) so that only buses showing sustained post-event activity are labeled as involved. This test provides fast and robust identification of nodes involved in the topology change. When $|N_c|< 2$, the identified buses cannot form a valid line, and the event is treated as a false alarm from the topology-change detector.

\subsubsection{Constrained Sparse Identification of Line Changes}
With the involved buses identified, we now determine the lines that changed status (removed or added). Let $\mathcal{E}_{\mathrm{cand}}$ be the set of $\binom{|N_c|}{2}$ candidate lines connected to the involved buses. We build the corresponding matrix $E_{\text{sub}}\in\mathbb{R}^{n\times |\mathcal{E}_{\mathrm{cand}}|}$. 
Since the number of changed lines is small relative to the candidate set, we exploit this sparsity by first applying LASSO regression with sign constraints in \eqref{eq:lasso-sign} to identify the changed lines. We then solve a regression restricted to the identified lines to estimate their parameters.

To encode “addition” vs. “deletion,” we partition candidates using the pre-event topology: if a candidate line already exists, it is placed in $\mathcal{E}_{\mathrm{del}}$; otherwise, it is placed in $\mathcal{E}_{\mathrm{add}}$. Denote by $\gamma\in\mathbb{R}^{|\mathcal{E}_{\mathrm{cand}}|}$ the diagonal of $\Gamma=\mathrm{diag}(\gamma)$, where each $\gamma_\ell$ corresponds to a candidate line $\ell$.
\begin{subequations}
\label{eq:lasso-sign}
\begin{align}
\intertext{\textbf{Sparse Identification of Changed Lines}}
\min_{\gamma\in\mathbb{R}^{|\mathcal{E}_{\mathrm{cand}}|}}\;&
\tfrac{1}{2}\,\bigl\| R^{res} - E_{\text{sub}}\,\mathrm{diag}(\gamma)\,E_{\text{sub}}^\top V \bigr\|_F^2
+\lambda \|\gamma\|_1, \label{eq:lasso-obj}\\[2mm]
\text{s.t.}\quad
& \gamma_\ell \ge 0\;\;(\ell\in\mathcal{E}_{\mathrm{add}}),\qquad
  \gamma_\ell \le 0\;\;(\ell\in\mathcal{E}_{\mathrm{del}}). \label{eq:lasso-signcons}
\end{align}
\end{subequations}
where $\lambda>0$ balances fit and sparsity, $V$ is the stacked voltage deviation, i.e., $V := \bigl[\,\tilde v_t,\; \tilde v_{t+1},\; \ldots,\; \tilde v_{t+d-1}\,\bigr] \in \mathbb{R}^{N \times d}$. It is a convex program and can be solved efficiently with off-the-shelf solvers. After solving \eqref{eq:lasso-sign}, we form the support
\begin{equation}\label{eq:support}
    S=\{\ell:\;|\hat{\gamma}_\ell|>\varepsilon_\gamma\},
\end{equation}
where $\varepsilon_\gamma$ is introduced to reject near-zero lines.  We then verify whether the proposed line reconfiguration yields a valid radial topology. If so, line parameter recovery is performed; otherwise, the candidate reconfiguration is rejected and the event is treated as a spurious alarm (e.g., caused by load fluctuations or measurement noise): the pre-event topology is retained, and the detection–identification procedure continues onto subsequent measurements.

To recover line parameters after selecting $S$, we remove the $\ell_1$ regularization {in \eqref{eq:lasso-sign}}, and estimate the line parameters in $\gamma$ by least square regression. Note that once the line parameters are estimated, the sensitivity matrix in \eqref{equ:dynamics:power_flow} can be corrected. 

\begin{algorithm}[t]
\caption{Topology Change Detection and Identification}
\label{alg:topo_detect_ident}
\begin{algorithmic}[1]
\State Inputs: history length $H$, window size $d$, thresholds $\tau,\beta$, LASSO weight $\lambda$
\State Initialize pre-event topology $X_0$ and empty error buffer $B$
\For{$t = 0,1,\dots$}
    \State Compute prediction error $e_t$ and append it to buffer $B$
    \State Compute the robust baseline $\kappa_t$ 
    \If{$\|e_t\|_2 > \kappa_t$}
    \State Flag a potential event
    \EndIf
    \If{a potential event was flagged at time $t$}
    \State Classify the event at time $t{+}1$
    \EndIf
    \If{a topology change is detected}
        \State Compute the residual $r_t^{\mathrm{res}}$ and append it to $R^{\mathrm{res}}$
        \If{$d$ residuals have been collected}
            \State Identify candidate nodes $N_c$ using \eqref{eq:node_id}
            \If{$|N_c| \ge 2$}
                \State Construct candidate edge set $\mathcal{E}_{\mathrm{cand}}$ from $N_c$
                \State Form the submatrix $E_{\text{sub}}$
                \State Solve the \eqref{eq:lasso-sign} to obtain line set $S$
                \If{$S$ defines a feasible topology}
                    \State Refine line parameters by least squares
                    \State Update the topology matrix $X_t$
                \EndIf
            \EndIf
            \State Reset topology-change flag
            \State Clear residual buffer $R^{\mathrm{res}}$
        \EndIf
    \EndIf
\EndFor
\end{algorithmic}
\end{algorithm}

We conclude this section with Algorithm \ref{alg:topo_detect_ident}. In the next section, we present our online voltage control algorithm, that is adaptable to topology changes. 
\section{Online Policy Optimization for Voltage Control}
\label{sec:online voltage control}
Recall the voltage control problem in \eqref{eq:volt-control-problem}, {with cost function $h_t$ given in \eqref{eq:cost}}. We now present an online policy optimization framework to update the voltage control policy $\pi(\cdot, \theta)$ using real-time measurement data. The framework is integrated with the topology change detection and identification algorithm introduced in the previous section to adapt to topology changes. Since the topology-change derivation was formulated for the full control set $\mathcal{P}$ including all buses, we use $X$ instead of $X_{\mathcal{P}}$ here for consistency.
\subsection{Online Policy Optimization Objective}
The goal of online policy optimization is to update parameter $\theta_t \!\to\! \theta_{t+1}$ so that performance of voltage control remains competitive with respect to time-varying system conditions. Following \cite{lin2023online}, online policy optimization aims to minimize the \emph{regret} with respect to the best fixed policy in hindsight, where the regret is defined as follows,
\begin{equation}
\label{eq:regret}
\mathrm{Reg}_T \;=\; \sum_{t=0}^{T-1}  h_t(\theta_t)\;-\;\min_{\theta}\sum_{t=0}^{T-1}\hat h_t(\theta).
\end{equation}
Here, $\hat h_t(\theta)$ is a surrogate cost that uses the same cost function, i.e., $\hat h_t(\theta):=h_t(\theta)$~\cite{lin2023online},  
which characterizes how good $\theta_t$ is at time $t$ if $\theta_t$ is applied from the start, without the impact of other historical policy parameters $\theta_{0:t-1}$. 

\subsection{Online Gradient-Based Policy Updates}
To minimize regret, we adopt the Memoryless Gradient-based Adaptive Policy Selection (M-GAPS) algorithm~\cite{pmlr-v247-lin24a}, which provides an efficient policy update rule based on an estimated system model. The resulting update takes the form:
\begin{equation}
    \theta_{t+1} \;\gets\; \theta_t - \eta\, G_{t+1}.
\end{equation}
where $G_{t+1}$ denotes the gradient of $h_{t+1}$ with respect to all past policy parameters $\theta_{0:t}$. Formally,
\begin{equation}
    G_{t+1} \coloneqq \sum_{\tau = 0}^{t} \frac{\partial h_{t+1}}{\partial \theta_\tau} =  \frac{\partial h_{t+1}}{\partial  q_{t+1}} \sum_{\tau = 0}^{t} \frac{\partial  q_{t+1}}{\partial \theta_\tau}.
\end{equation}
Intuitively, $G_{t+1}$ captures how the entire sequence of historical parameters $\theta_{0:t}$ contribute the current cost $h_{t+1}$. The dependency arises because the system state $q_{t+1}$, which denotes the reactive power setpoint at time step $t+1$, depends on all previous control actions $u_{0:t}$ and the initial state.

Using the cost function definition in \eqref{eq:cost} and the power model in \eqref{equ:dynamics:power_flow}, we can compute the gradient as
\begin{align}
\label{eq:gradient_computation}
    G_{t+1} = \left[ 2( v_{t+1} -  v_{nom})^\top {Q}_x\hat{X} + 2  q_{t+1}^\top {Q}_u \right]  y_{t+1}^q,
\end{align}
where 
$y_{t+1}^q \coloneqq \sum_{\tau = 0}^{t} \frac{\partial  q_{t+1}}{\partial \theta_\tau}$
is an auxiliary variable that accumulates the sensitivity of the reactive power state $q_{t+1}$ with respect to all past policy parameters. Note that $\hat{X}$ can be estimated using the proposed method described in Algorithm \ref{alg:topo_detect_ident}.
Equation \eqref{eq:gradient_computation} separates the gradient into two components: the \emph{instantaneous sensitivity} of the cost with respect to reactive power (the bracketed term), and the \emph{accumulated sensitivity} $y_{t+1}^q$, which captures how the current reactive power setpoint depends on all past policy parameters. The policy update naturally stops when either component becomes zero. 

\subsection{Recursive Computation for $y_{t+1}^q$}
The auxiliary state $ y_t^q$ admits a recursive form that enables efficient online computation. 
Since the reactive power evolves as $q_{t+1} = q_t +  u _t = q_t + \pi(v_t, \theta_t)$, substituting in $v_t = Xq_t + v_{env}$, we get 
\begin{equation}
    q_{t+1} = q_{t} + \pi({X} {q}_{t}+ {v}_{env}, \theta_t).
\end{equation}
By the chain rule, $y_{t+1}^q$ evolves as
\begin{equation}\label{eq:y_update}
     y_{t+1}^q = \underbrace{\frac{\partial q_{t+1}}{\partial q_t} \cdot  y_t^q}_{\text{Propagate past sensitivities}} + \underbrace{\frac{\partial q_{t+1}}{\partial \theta_t}}_{\text{Add current sensitivity}}.
\end{equation}
The Jacobian term $\frac{\partial q_{t+1}}{\partial q_t} =  {I} + \frac{\partial \pi}{\partial v_t}{X}$ propagates the influence of past parameters, while the term $\frac{\partial q_{t+1}}{\partial \theta_t} = \frac{\partial \pi}{\partial \theta_t}$ incorporates the effect of the current $\theta_t$. Both terms can be computed efficiently using automatic differentiation libraries (e.g., PyTorch). The recursive update in \eqref{eq:y_update}, together with the gradient expression in \eqref{eq:gradient_computation}, forms the core of the online policy optimization procedure summarized in Algorithm~\ref{alg:online_opt}. Note that the estimated $\hat{X}$, obtained from Algorithm \ref{alg:topo_detect_ident}, is required in both computations, and its accuracy impacts the overall cost.

\begin{algorithm}[t]
\caption{Online Policy Optimization with Topology Change Estimation}
\label{alg:online_opt}
\begin{algorithmic}[1]
\State \textbf{Parameters:} Learning rate $\eta$.
\State \textbf{Initialize:} Policy parameter $\theta_0$, auxiliary state $ y_0^q = 0$, initial state $v_0$ (initial voltage magnitude), $q_0$ (initial reactive power setpoint).
\State Retrieve initial sensitivity $\hat{X}$ from known topology.
\For {$t = 0, 1, \dots, T-1$}
    \If{a topology change is detected (Lemma \ref{def:detection})}
        \State 
        Re-estimate sensitivity matrix $\hat{X}$ by Algorithm \ref{alg:topo_detect_ident}
    \EndIf
    \State Calculate control based on the current policy parameter $\theta_t$: $ u _t = \pi(v_t, \theta_t)$.
    \State Update reactive power setpoint $q_{t+1} = q_t +  u _t$.
    \State Apply $q_{t+1}$ to the grid and measure next voltage $v_{t+1}$.
    \State Update auxiliary state $ y_{t+1}^q$ according to \eqref{eq:y_update}.
    \State Compute gradient $G_{t+1}$ according to \eqref{eq:gradient_computation}.
    \State Update policy parameters $\theta_{t+1} \gets \theta_t - \eta G_{t+1}$.
\EndFor
\end{algorithmic}
\end{algorithm}

\section{Experiments}\label{sec:experiments}
In this section, we evaluate the proposed topology change detection and identification framework and demonstrate its effectiveness in enhancing online voltage control. The code is available at \href{https://github.com/JieFeng-cse/Online-Optimization-for-NN-Voltage-Control}{https://github.com/JieFeng-cse/Online-Optimization-for-NN-Voltage-Control}.
\subsection{Experiment Setup} \label{sec:ep-setup}
\begin{figure}[tbp]
    \centering
    \begin{subfigure}[b]{0.3\linewidth}
        \centering
        \includegraphics[width=\linewidth]{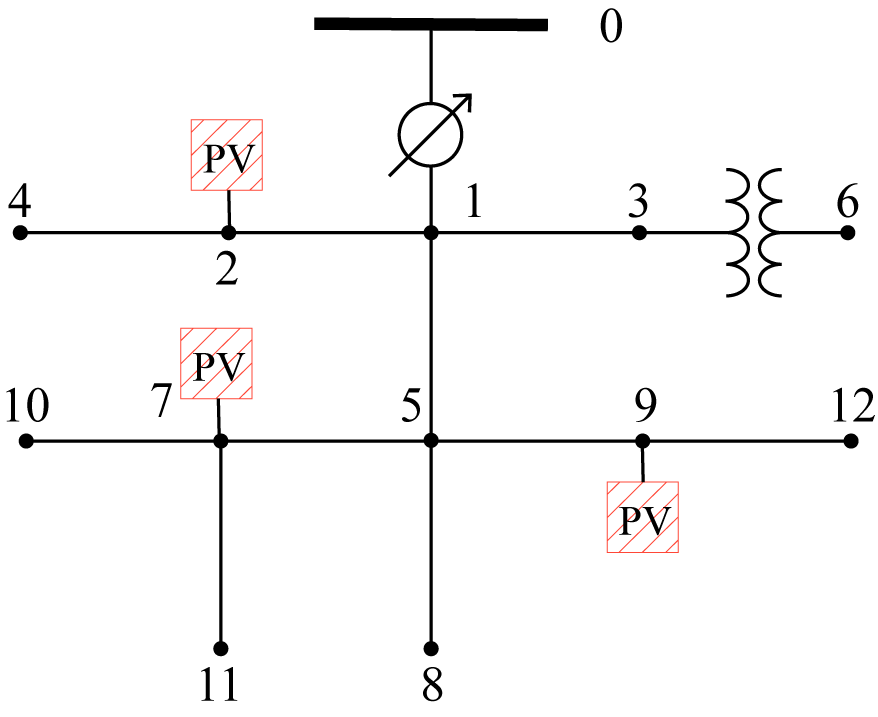}
        \label{fig:13bus}
    \end{subfigure}
    \hfill
    \begin{subfigure}[b]{0.65\linewidth}
        \centering
        \includegraphics[width=\linewidth]{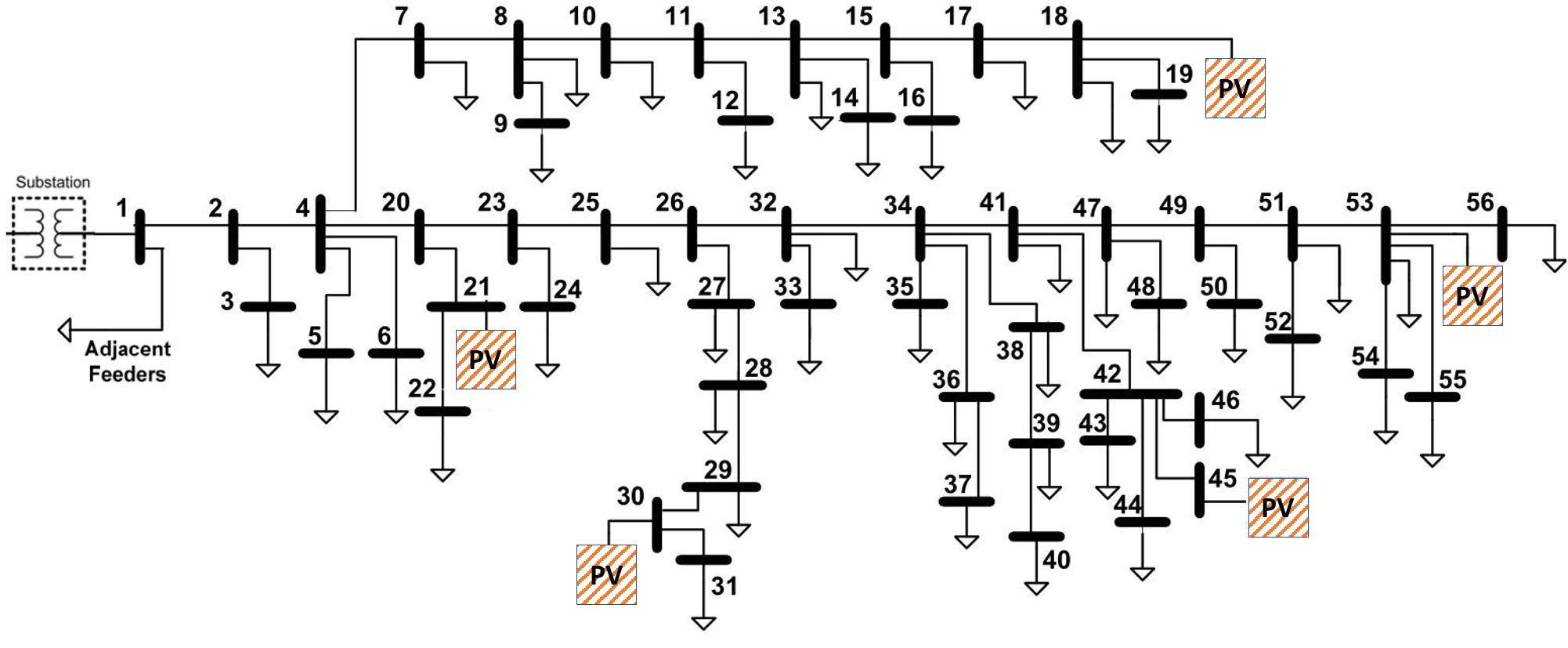}
        \label{fig:56bus}
    \end{subfigure}
    \caption{Diagram for IEEE 13-bus and SCE 56 bus systems.}
    \label{fig:sim_system}
\end{figure}

We evaluate the proposed framework on two radial test feeders: the IEEE 13-bus system (single-phase version of~\cite{8063903}) and the Southern California Edison (SCE) 56-bus network~\cite{6345736}. The system diagram is shown in Fig.~\ref{fig:sim_system}. While the controller and theory are derived based on the linearized model, all experiments are conducted using the nonlinear power flow model implemented in Pandapower~\cite{pandapower}.

Two disturbance scenarios are simulated to induce realistic voltage deviations:  
(1) \emph{High Voltage}, daytime conditions with high photovoltaic (PV) generation, and  
(2) \emph{Low Voltage}, heavy-load conditions with no PV output.  
In each case, one scenario is selected uniformly at random, and active power injections are randomly perturbed to create voltage deviations of approximately 5–15\% from nominal values. The configuration of each system is described below.

\paragraph{IEEE 13-Bus System}
The nominal voltage is 4.16~kV with an acceptable voltage range $[3.952, 4.368]$~kV (i.e., $\pm5\%$ of nominal). Three PV units are installed at buses 2, 7, and 9~\cite{10336939} that can flexibly adjust reactive power injection for voltage control. 
To assess the framework's adaptability, we evaluate eight topology reconfiguration scenarios involving up to four line switching actions, presented in Table \ref{tab:ieee13_reconfig}.
\begin{table}[h]
\centering
\caption{Topology Reconfiguration Scenarios for the IEEE 13-Bus System}
\label{tab:ieee13_reconfig}
\renewcommand{\arraystretch}{1.2}
\begin{tabular}{c l}
\toprule
\textbf{Scenario} & \textbf{Line Switching Actions (Disconnect $\rightarrow$ Connect)} \\
\midrule
1 & $(1,5) \rightarrow (3,9)$ \\
2 & $(1,5) \rightarrow (3,12)$ \\
3 & $(1,5) \rightarrow (2,7)$ \\
4 & $(1,5) \rightarrow (1,9)$ \\
5 & $(5,7) \rightarrow (2,7)$ \\
6 & $(5,7) \rightarrow (4,7)$ \\
7 & $(1,5),(5,7) \rightarrow (2,7),(3,9)$ \\
8 & $(5,9),(5,7) \rightarrow (3,9),(4,7)$ \\
\bottomrule
\end{tabular}
\end{table}

\paragraph{SCE 56-Bus System}
The nominal voltage is 12~kV, with an acceptable range of $[11.4, 12.6]$~kV (i.e., $\pm5\%$ of nominal). Five controllable PVs are placed at buses 18, 21, 30, 45, and 53~\cite{shi2021stability}. The considered topology change scenarios are presented in Table \ref{tab:sce56_reconfig}.
\begin{table}[t]
\centering
\caption{Topology Reconfiguration Scenarios for the SCE 56-Bus System}
\label{tab:sce56_reconfig}
\renewcommand{\arraystretch}{1.2}
\begin{tabular}{c l}
\toprule
\textbf{Scenario} & \textbf{Line Switching Actions (Disconnect $\rightarrow$ Connect)} \\
\midrule
1 & $(34,41) \rightarrow (2,41)$ \\
2 & $(47,49) \rightarrow (18,49)$ \\
3 & $(13,15) \rightarrow (15,34)$ \\
4 & $(41,42) \rightarrow (46,47)$ \\
5 & $(47,49) \rightarrow (11,49)$ \\
6 & $(34,41),(47,49) \rightarrow (2,41),(10,49)$ \\
7 & $(34,41),(23,15) \rightarrow (2,41),(15,34)$ \\
8 & $(41,42),(32,34) \rightarrow (20,34),(42,48)$ \\
\bottomrule
\end{tabular}
\end{table}

During each simulation, the system starts from its original topology with a randomly generated initial voltage deviation, under the assumption that the original topology is known. The topology change detection algorithm is activated at the beginning of each run, and online optimization begins using the true original topology. A random event from the predefined list described above is sampled to induce a topology change after the start. The sensitivity estimation algorithms have no prior knowledge of possible reconfigurations. Reactive power capacity limits are ignored in this study but can be incorporated in practice as a soft penalty in the RL reward function or as a hard constraint by projection.

\subsection{Neural Network Policy Design and Pre-training}
Following~\cite{10336939}, we employ a decentralized monotone neural network policy at each controllable bus, ensuring closed-loop voltage stability. Each bus $i$ applies a local control policy $\pi_i(v_{i,t},\theta_{i,t})$ parameterized by a monotone neural network, that is parameterized by $\theta_{i,t}=(\theta_{i,t}^1,\theta_{i,t}^2)$. The vector $\theta_{i,t}^1 \in \mathbb{R}^{d_\theta}$ contains the weights of the monotone neural network, while $\theta_{i,t}^2 \in \mathbb{R}$ is a scalar parameter that adjusts the local voltage setpoint. The control law is given by
\begin{subequations}\label{eq:controller}
    \begin{align}
    u_{i,t } &= -[\xi_{\theta_{i,t}^1}^{+}(v_{i,t}-b_{\theta_{i,t}^2}) + \xi_{\theta_{i,t}^1}^{-}(v_{i,t}-b_{\theta_{i,t}^2})],\\
    b_{\theta_{i,t}^2}&=v_{\text{nom},i} + 0.05 \cdot v_{\text{nom},i} \cdot \tanh(\theta_{i,t}^2). \label{eq:setpoint}
    \end{align}
\end{subequations}
Here, $\xi^+$ and $\xi^-$ are monotonically increasing neural-network components constructed as in~\cite{10336939}. The two components are activated above and below the learned setpoint $b_{\theta_{i,t}^2}$, respectively.
The negative sign ensures an overall monotonically decreasing control law, a sufficient condition for voltage stability \cite{10336939}. The learnable setpoint $b_{\theta_{i,t}^2}$ is constrained by \eqref{eq:setpoint} to always lie within the safe voltage range.

The parameters $\theta_i$ are pre-trained using the Stable Deep Deterministic Policy Gradient (Stable-DDPG) algorithm~\cite{10336939} under the known initial topology. This produces a provably stable controller that serves as the initial policy for online adaptation. During online operation, these parameters are updated over time, becoming $\theta_{i,t}$ at each time step $t$. For all experiments below, we run the complete framework using online policy optimization with a learning rate of $\eta = 0.1$ (unless otherwise specified) and pre-trained neural controllers. We choose a relatively large learning rate to accelerate adaptation to changes in the system dynamics.

\subsection{Topology Change Detection and Identification}
We first evaluate the performance of the proposed topology change detection and identification framework in Algorithm \ref{alg:topo_detect_ident}. For topology change detection, the error buffer size is set to $H=5$. For topology identification, the data collection window is set to $d=10$ for the IEEE 13-bus system and $d=15$ for the SCE 56-bus system.
Once the window is filled, the residual matrix $R^{\text{res}}\in\mathbb{R}^{N\times d}$ is used to (i) identify candidate nodes via~\eqref{eq:node_active} and \eqref{eq:node_id} with threshold $\tau=10^{-2}$, (ii) select line candidates through the sparse LASSO formulation~\eqref{eq:lasso-sign} and set $\epsilon_\gamma=1.0$ in \eqref{eq:support} , and (iii) re-estimate the corresponding line parameters by regression to confirm the final topology.

Each trajectory consists of 1000 steps and begins from the original topology with an initial voltage deviation, as described in Section~\ref{sec:ep-setup}. A random topology reconfiguration is introduced at step 50, and a small random load change occurs every 200 steps. Over 200 simulated trajectories, the proposed method achieves a 100\% event-detection success rate\footnote{An event is considered successfully detected if a topology change is flagged following a true reconfiguration. False positives may still occur but are later ruled out during topology identification—for example, cases with fewer than two active nodes or violations of the radiality (tree) condition.}. The accuracy of the node inclusion step slightly decreases due to nonlinear effects. 
Based on the identified nodes, the method maintains high accuracy in both candidate line selection and final topology identification, as summarized in Table~\ref{tab:topology_performance}. 

Despite the high overall accuracy, a few failures occur in the IEEE 13-bus system. These typically arise when several nodes experience unusually large voltage deviations, causing a few rows of the residual matrix $R^{\text{res}}$ to dominate the regression. Because each row corresponds to a node’s voltage residual, such imbalance skews the optimization and leads to incorrect line selection.
For example, in one failure case, line $(1,5)$ is disconnected while line $(1,9)$ is added. The resulting voltage residuals at buses 5 and 9 are significantly larger than those at other buses, which biases the regression toward incorrectly identifying a spurious line $(5,9)$.
This issue is exacerbated by the quadratic loss term in \eqref{eq:lasso-obj}; replacing it with an $\ell_1$-norm regression error can mitigate the effect but may slightly reduce accuracy in other scenarios. Moreover, since the method is derived from a linearized model, large voltage deviations can amplify nonlinear effects, potentially leading to increased estimation error or even incorrect topology identification.

\begin{table}[t] 
\centering 
\begin{tabular}{lcccc} 
\toprule Environments & Event & Node & Line & Final \\ & detect & inclusion & inclusion. & identification \\ 
\midrule IEEE 13-bus & 1.00 & 0.98 & 0.96 & 0.90 \\ 
SCE 56-bus & 1.00 & 1.00 & 1.00 & 0.98 \\ 
\bottomrule 
\end{tabular} 
\caption[Success rates of detection and identification]{Success rates of event detection, node and line candidate inclusion, and final identification. The ``Final identification'' metric counts a trial as correct only when the inferred topology change exactly matches the true change.} 
\label{tab:topology_performance} 
\end{table}

\subsection{Control Performance}
Table~\ref{tab:comparison} compares control performance across different sensitivity updating strategies. The cumulative cost is defined in~\eqref{eq:question,cost}. {To compare our approach based on topology change detection and identification, we estimate control cost for RLS and OLS based sensitivity estimated, mentioned in Section \ref{sec:sub_sensitivity}.} For RLS and OLS, we adopt the standard RLS formulation and augment OLS with ridge regularization~\cite{7997787} to mitigate collinearity in operating data. RLS is initialized using the last known sensitivity before a topology change. Details are included in Appendix \ref{appendix:sensitivity}.
The estimation time is the time step when the voltage deviation prediction error falls below $10^{-4}$; otherwise, the time is capped at 1000. 
The sensitivity error (Err. (norm)) denotes the matrix norm between the estimated and true sensitivities, i.e., $\lVert X_{\mathcal{P},t} - \hat{X}_{\mathcal{P},t}\rVert_2$. 
Online policy optimization is suspended whenever a topology-change event is detected, until a new sensitivity estimate becomes available.

Without online adaptation, the pre-trained policy performs poorly since it is optimized for an average operating condition under a fixed topology. As a result, it becomes suboptimal for specific load profiles and degrades significantly after topology changes. Incorporating online adaptation (via OLS, RLS, or the proposed method) enables the controller to adjust to specific load conditions and network topology, thereby significantly improving the overall performance.

Given that topology changes typically affect only a small subset of lines compared to the total number of lines in the network, our identification algorithm focuses only on the lines directly involved in the change.
This localized design allows the update law to scale with the number of changed lines rather than the full network size, substantially reducing data requirements. In contrast, regression-based estimators such as OLS and RLS scale with system size, resulting in longer estimation time and larger errors. OLS further suffers from ill-conditioning even with ridge regularization, while RLS improves convergence speed but remains slower to adapt. By accurately identifying topology changes (success rate $>90\%$), our method achieves the lowest overall control cost, reducing it by 5\% in the IEEE 13-bus system and 25\% in the SCE 56-bus system. The average estimation time is longer in the IEEE 13-bus due to the $10\%$ failed cases, where the entire trajectory is used for sensitivity estimation. If OLS was deployed after those failed cases, the average estimation time would be $26.41$.
\begin{table}[t]
\centering
\begin{tabular}{lcccc}
\toprule
Environments & Method & Cost & Est. time & Err. (norm) \\
\midrule
IEEE 13-bus & Fixed  & 38.54 & - & - \\
            & OLS  & 13.03 & 72.38 & 0.08 \\
            & RLS  & 13.20 &  55.16 & 0.08 \\
            & \textbf{Ours} & \textbf{12.46} & \textbf{101.94 (26.41) }  & \textbf{0.016} \\
\midrule
SCE 56-bus  & Fixed  & 266.81 & - & - \\
            & OLS  & 114.44 & 255.56 & 0.11 \\
            & RLS  & 115.73 & 247.78 & 0.10 \\
            & \textbf{Ours} & \textbf{85.78} & \textbf{57.42 }  & \textbf{0.006} \\
\bottomrule
\end{tabular}
\caption{Performance comparison across different updating schemes, averaged over 200 trajectories of 1000 steps each. For the IEEE 13-bus system, if OLS is deployed after the failure of ours, the average estimation time would be $26.41$.}
\label{tab:comparison}
\end{table}

\subsection{Voltage Control with Real World Loads}
\begin{figure}
    \centering
    \includegraphics[width=0.95\linewidth]{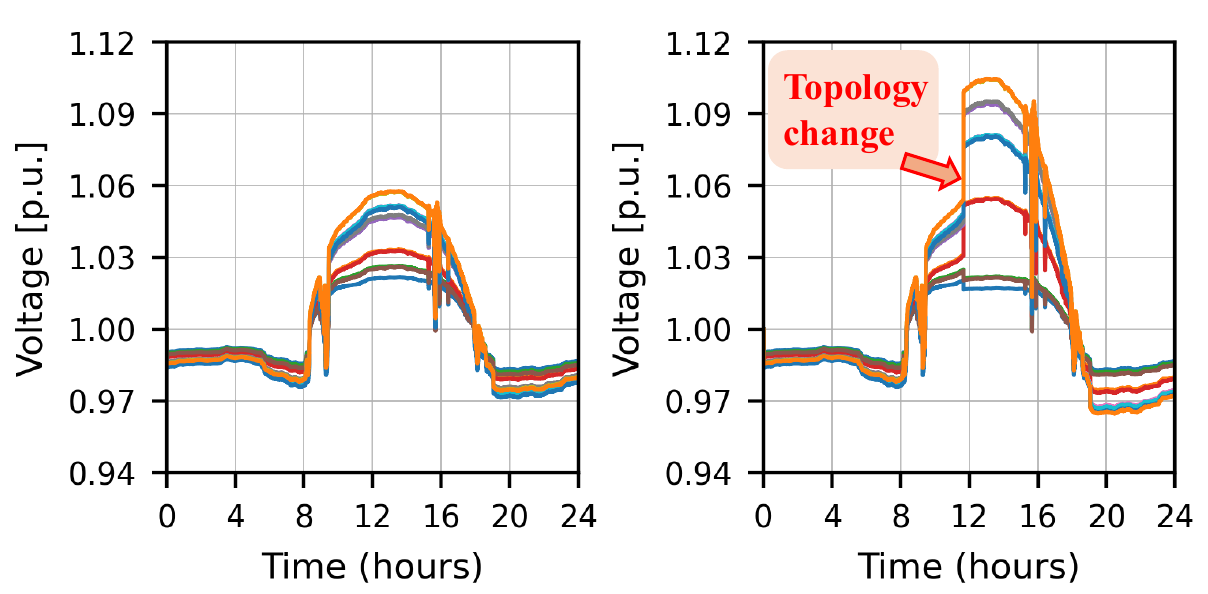}
    \caption{IEEE 13-bus voltage trajectories without control under real-world loads. 
Left: 24-hour voltage profile with no topology changes.
Right: voltage profile when a topology change occurs at 11{:}40~a.m., disconnecting line (1,5) and connecting line (2,7). 
}
    \label{fig:real_load_no_ctrl}
\end{figure}
\begin{figure}
    \centering
    \includegraphics[width=0.95\linewidth]{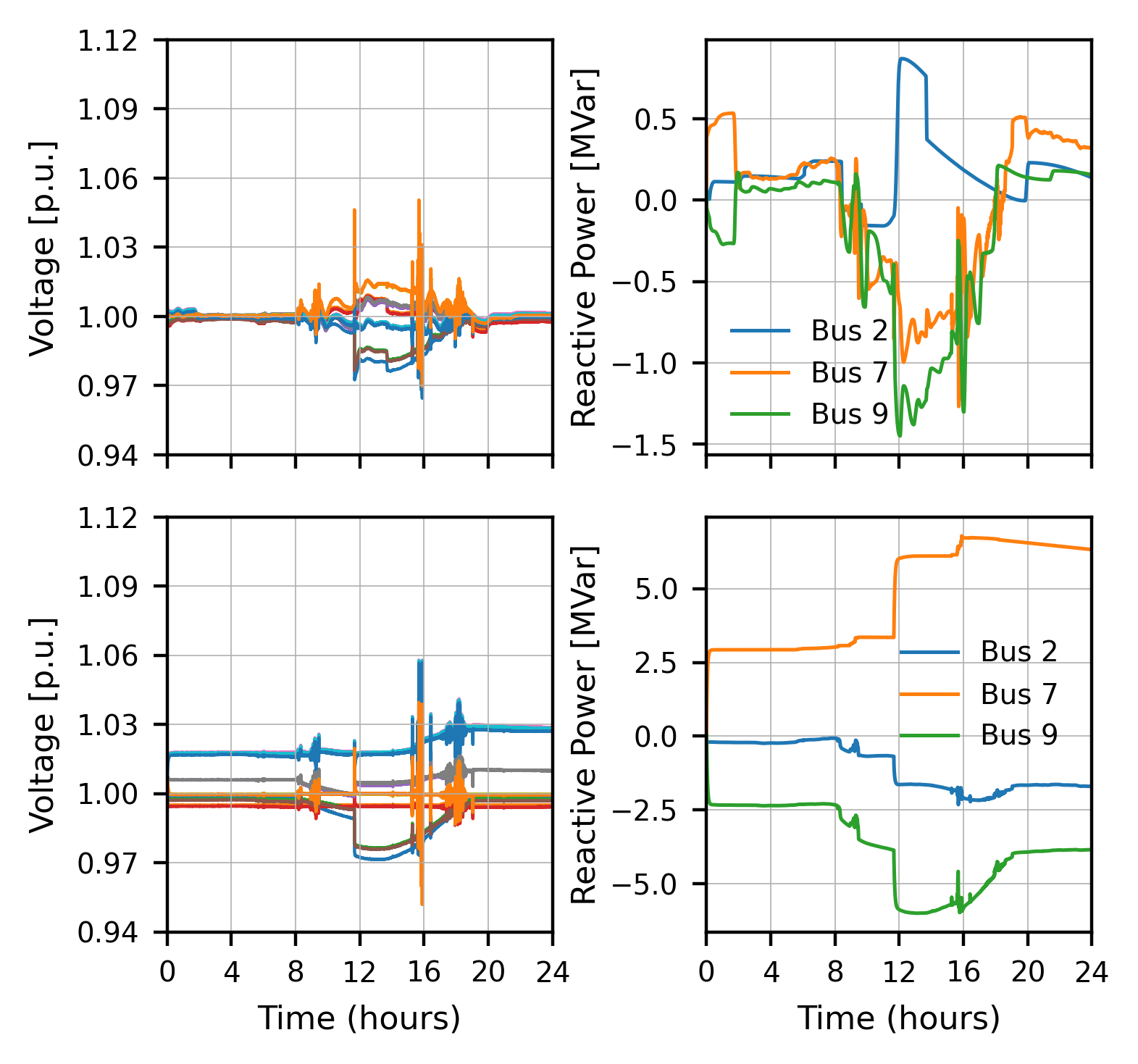}
    \caption{IEEE 13-bus voltage trajectories with control, under the same real-world load profile and topology change as Fig.~\ref{fig:real_load_no_ctrl}. 
Top row: voltage (left) and reactive power injections (right) under the proposed adaptive control method. 
Bottom row: voltage (left) and reactive power injections (right) using the pre-trained control policy without online adaptation. 
}
    \label{fig:real_load_online_control}
\end{figure}
Finally, we evaluate the proposed method on the IEEE 13-bus system using real-world load and PV generation profiles over a full day~\cite{10336939}. Each load and generation data point is repeated six times to reflect that the controller operates six times faster than the load variations.
Figure~\ref{fig:real_load_no_ctrl} shows the voltage trajectory without control. The topology change is introduced at 11{:}40~a.m. abruptly alters power-flow paths, resulting in a sharp voltage disturbance. As shown in the first row of Figure~\ref{fig:real_load_online_control}, our method accurately detects and identifies the topology change while maintaining stable voltage regulation.
In contrast, the fixed pre-trained policy (second row) eventually restores voltages to acceptable levels but requires excessive reactive power injection, incurring nearly 30$\times$ higher cumulative control cost compared to ours. 

\section{Conclusion}\label{sec:conclusion}
This paper presented an online policy optimization framework for voltage control in distribution grids under unknown topology changes. The proposed method leverages data-driven estimation of voltage–reactive power sensitivities to enable efficient policy adaptation under changing system conditions. By exploiting the sparsity of topology change events and enforcing radial network topology constraints, the estimation algorithm can accurately detect topology change events and identify the changed lines.
Simulations on the IEEE 13-bus and SCE 56-bus systems show that the proposed approach achieves over 90\% line identification accuracy with short observation windows and substantially improves voltage regulation performance compared with both non-adaptive and adaptive baselines using regression-based sensitivity estimation. While the current framework is derived from a linearized model and validated on nonlinear systems, future work will focus on directly analyzing and designing from nonlinear dynamics to better capture real system behavior. 
We also plan to investigate its robustness under communication delays, measurement noise, and potential cyber attacks.

\section*{Acknowledgment}
The work of Y. Shi and J. Feng are supported by the National Science Foundation under Grant ECCS-2442689. The work of J. Feng is also supported by the UC-National Laboratory In Residence Graduate Fellowship L24GF7923. The work of D. Deka is supported by the MIT Energy Initiative and Los Alamos National Laboratory LDRD program as part of the Artimis project. Any opinions, findings and conclusions in this paper are of the authors and do not necessarily reflect the views of the funding agencies. 

\bibliographystyle{ieeetr}
\bibliography{bib/alias,bib/JC,bib/Main-add}

\appendices
\section{Regression-Based Sensitivity Estimation}\label{appendix:sensitivity}

This appendix summarizes the formulations of the ordinary least squares (OLS) and recursive least squares (RLS) methods used for sensitivity estimation in data-driven voltage control.

\subsection{Ordinary Least Squares (OLS)}

From the linearized power flow relation
\begin{equation}
    \tilde{v}_{t+1} = X_{\mathcal{P}} u_t, \label{eq:app_vdiff}
\end{equation}
where $\tilde{v}_{t+1} = v_{t+1}-v_t$ denotes the voltage change and $u_t\in\mathbb{R}^M$ is the control input at time $t$, the goal is to estimate the sensitivity matrix $X_{\mathcal{P}}\in\mathbb{R}^{N\times M}$.
Stacking $T$ samples of inputs and outputs yields
\begin{equation}
    \tilde{V} = X_{\mathcal{P}} \Delta, \label{eq:app_ols_matrix}
\end{equation}
where $\tilde{V}=[\tilde{v}_1, \tilde{v}_2, \ldots, \tilde{v}_{T}] \in \mathbb{R}^{N\times T}$ and $\Delta=[u_0, u_1, \ldots, u_{T-1}] \in \mathbb{R}^{M\times T}$.
The OLS estimate minimizes the squared residual error:
\begin{equation}
    \min_{X_{\mathcal{P}}} \; \sum_{t=0}^{T-1} \| \tilde{v}_{t+1} - X_{\mathcal{P}} u_t \|_2^2.
\end{equation}
When $\Delta$ has full row rank, the analytical solution is
\begin{equation}
    \hat{X}_{\mathcal{P}} = \tilde{V}\Delta^{+} = \tilde{V}\Delta^\top(\Delta\Delta^\top)^{-1},
\end{equation}
where $(\cdot)^{+}$ denotes the Moore–Penrose pseudoinverse.
OLS provides an unbiased estimate under a fixed topology for the linear power flow model.

\subsection{Recursive Least Squares (RLS)}

To enable online adaptation, RLS updates the sensitivity estimate as new data come in.
For notational convenience, vectorize $\hat{X}_{\mathcal{P}}$ into $\omega_t = \mathrm{vec}(\hat{X}_{\mathcal{P}}) \in \mathbb{R}^{NM}$, and define the regression matrix $\Phi_t = I_N \otimes u_t^\top \in \mathbb{R}^{N\times (NM)}$, where $\otimes$ denotes the Kronecker product. The RLS objective is
\begin{equation}
    J_{\text{RLS}}(t) = \sum_{i=1}^{t}\lambda^{t-i} 
        \| \tilde{v}_i - \Phi_{i-1} \omega_t \|_2^2,
\end{equation}
where $0<\lambda\le1$ is a forgetting factor that downweights older samples.
The recursive update equations are:
\begin{subequations}\label{eq:app_rls}
\begin{align}
    p_0 &= \alpha I, \qquad \alpha \gg 1,\\
    K_t &= \frac{p_{t-1}\Phi_{t-1}^\top}{\lambda + \Phi_{t-1} p_{t-1}\Phi_{t-1}^\top},\\
    \omega_t &= \omega_{t-1} + K_t (\tilde{v}_t - \Phi_{t-1} \omega_{t-1}),\\
    p_t &= \frac{1}{\lambda}\left(p_{t-1} - K_t \Phi_{t-1}^\top p_{t-1}\right),
\end{align}
\end{subequations}
where $p_t\in\mathbb{R}^{(NM)\times (NM)}$ is the error covariance matrix and $K_t$ is the Kalman gain vector.
The initialization $p_0=\alpha I$ reflects high initial uncertainty.
For large networks, updating $p_t$ can be expensive. 
A common simplification is to run $N$ parallel scalar-output RLS estimators, each updating one row of $X_{\mathcal{P}}$, while sharing a covariance matrix $p_t\in\mathbb{R}^{M\times M}$ and gain $K_t\in\mathbb{R}^{M\times N}$ under the assumption that voltage measurement noise is independent across buses \cite{haykin2002adaptive}.

\end{document}